\newcommand{\set}[1]{\left\{#1\right\}}
\newcommand{\pr}[1]{\left(#1\right)}
\newcommand{\fpr}[1]{\mathopen{}\left(#1\right)}
\newcommand{\fspr}[1]{\mathopen{}\left[#1\right]}
\newcommand{\abs}[1]{{\left|#1\right|}}
\newcommand{\np}{\textbf{NP}}
\newcommand{\define}{\leftarrow}
\DeclareRobustCommand{\dispfunc}[2]{%
  \ensuremath{%
  \ifthenelse{\equal{#2}{}}%
    {\mathit{#1}}%
    {\mathit{#1}\fpr{#2}}}}
\newcommand{\dens}[1]{\dispfunc{d}{#1}}
\newcommand{\jaccard}[1]{\dispfunc{J}{#1}}
\newcommand{\score}[1]{\dispfunc{q}{#1}}
\newcommand{\bigO}[1]{\dispfunc{\mathcal{O}}{#1}}
\newcommand{\matchprb}{\textsc{3DM-2}\xspace}
\newcommand{\algsoftone}{\textsc{Itr}\xspace}
\newcommand{\algsofttwo}{\textsc{Grd}\xspace}
\newcommand{\fm}[1]{\mathcal{#1}}
\newcommand{\mean}[2]{\operatorname{E}_{#1}\fspr{#2}}
\newcommand{\dtname}[1]{\textsl{#1}}
\newcommand{\calG}{\ensuremath{{\mathcal G}}}
\newcommand{\calS}{\ensuremath{{\mathcal S}}}
\newcommand{\problemsoft}{\textsc{JWDS}\xspace}
\definecolor{yafcolor5}{rgb}{0.141, 0.345, 0.643}
\DeclareRobustCommand{\dispfunc}[2]{%
	\ensuremath{%
		\ifthenelse{\equal{#2}{}}%
			{\mathit{#1}}%
			{\mathit{#1}\fpr{#2}}}}
\definecolor{yafaxiscolor}{rgb}{0.3, 0.3, 0.3}
\definecolor{yafcolor1}{rgb}{0.4, 0.165, 0.553}
\definecolor{yafcolor2}{rgb}{0.949, 0.482, 0.216}
\definecolor{yafcolor3}{rgb}{0.47, 0.549, 0.306}
\definecolor{yafcolor4}{rgb}{0.925, 0.165, 0.224}
\definecolor{yafcolor5}{rgb}{0.141, 0.345, 0.643}
\definecolor{yafcolor6}{rgb}{0.965, 0.933, 0.267}
\definecolor{yafcolor7}{rgb}{0.627, 0.118, 0.165}
\definecolor{yafcolor8}{rgb}{0.878, 0.475, 0.686}
\newlength{\yafaxispad}
\newlength{\yaftlpad}
\newlength{\yaflabelpad}
\newlength{\yafaxiswidth}
\newlength{\yafticklen}
\def\pgfplots@drawtickgridlines@INSTALLCLIP@onorientedsurf#1{}
\newcommand{\yafdrawaxis}[4]{
	\pgfplotstransformcoordinatex{#1}\let\xmincoord=\pgfmathresult 
	\pgfplotstransformcoordinatex{#2}\let\xmaxcoord=\pgfmathresult 
	\pgfplotstransformcoordinatey{#3}\let\ymincoord=\pgfmathresult 
	\pgfplotstransformcoordinatey{#4}\let\ymaxcoord=\pgfmathresult 
	\pgfsetlinewidth{\yafaxiswidth} 
	\pgfsetcolor{yafaxiscolor}
	\pgfpathmoveto{\pgfpointadd{\pgfpointadd{\pgfplotspointrelaxisxy{0}{0}}{\pgfqpointxy{\xmincoord}{0}}}{\pgfqpoint{-0.5\yafaxiswidth}{\yafaxispad}}}
	\pgfpathlineto{\pgfpointadd{\pgfpointadd{\pgfplotspointrelaxisxy{0}{0}}{\pgfqpointxy{\xmaxcoord}{0}}}{\pgfqpoint{0.5\yafaxiswidth}{\yafaxispad}}}
	\pgfpathmoveto{\pgfpointadd{\pgfpointadd{\pgfplotspointrelaxisxy{0}{0}}{\pgfqpointxy{0}{\ymincoord}}}{\pgfqpoint{\yafaxispad}{-0.5\yafaxiswidth}}}
	\pgfpathlineto{\pgfpointadd{\pgfpointadd{\pgfplotspointrelaxisxy{0}{0}}{\pgfqpointxy{0}{\ymaxcoord}}}{\pgfqpoint{\yafaxispad}{0.5\yafaxiswidth}}}
	\pgfusepath{stroke}
}
\pgfplotsset{axis y line=left, axis x line=bottom,
	tick align=outside,
	tickwidth=\yafticklen,
	clip = false,
    x axis line style= {-, line width = 0pt, color=black!0},
    y axis line style= {-, line width = 0pt, color=black!0},
    x tick style= {line width = \yafaxiswidth, color=yafaxiscolor, yshift = \yafaxispad},
    y tick style= {line width = \yafaxiswidth, color=yafaxiscolor, xshift = \yafaxispad},
    x tick label style = {font=\small, yshift = \yaftlpad, inner xsep = 0pt},
    y tick label style = {font=\small, xshift = \yaftlpad},
    every axis y label/.style = {at = {(ticklabel cs:0.5)}, rotate=90, anchor=center, font=\small, yshift = -\yaflabelpad, inner sep = 0pt},
    every axis x label/.style = {at = {(ticklabel cs:0.5)}, anchor=center, font=\small, yshift = \yaflabelpad},
    x tick label style = {font=\small, yshift = 1pt},
    grid = major,
    major grid style  = {dash pattern = on 1pt off 3 pt},
	every axis plot post/.append style= {line width=\yafaxiswidth} ,
	legend cell align = left,
	legend style = {inner sep = 1pt, cells = {font=\scriptsize}},
	legend image code/.code={%
		\draw[mark repeat=2,mark phase=2,#1] 
		plot coordinates { (0cm,0cm) (0.15cm,0cm) (0.3cm,0cm) };% 
	} 
}
\newcommand{\pgfprintduration}[1]{%
	\ifthenelse{\equal{#1}{}}{---}{%
	\pgfmathsetmacro{\minutes}{floor(#1 / 60)}%
	\pgfmathsetmacro{\seconds}{#1 - 60*\minutes}%
	\pgfmathifthenelse{\minutes > 0}{"\pgfmathprintnumber{\minutes}m \pgfmathprintnumber[fixed,precision=0]{\seconds}s"}{"\pgfmathprintnumber{\seconds}s"}\pgfmathresult}}
\begin{document}

% \title{Finding dense subgraphs  in a graph sequence  under Jaccard constraint }
% \title{Dense subgraph discovery over multiple graph snapshots, under Jaccard constraint }
% \title{Dense subgraph discovery with Jaccard constraints}
% % \title{Constrained dense subgraph discovery}
\title{Jaccard-constrained dense subgraph discovery}

\author{Chamalee Wickrama Arachchi \and Nikolaj Tatti} 
%\date{}

\institute{HIIT, University of Helsinki, firstname.lastname@helsinki.fi}

%\author[1]{\fnm{Chamalee} \sur{Wickrama Arachchi}}\email{chamalee.wickramaarachch@helsinki.fi}
%\author[2]{\fnm{Nikolaj} \sur{Tatti}}\email{nikolaj.tatti@helsinki.fi}

%\affil[1]{\orgdiv{HIIT}, \orgname{University of Helsinki}, \orgaddress{\country{Finland}}}

\maketitle

\begin{abstract}
Finding dense subgraphs is a core problem in graph mining with many
applications in diverse domains. At the same time 
many real-world networks vary over time, that is, the dataset can be represented as a sequence of graph snapshots.
Hence, it is natural to consider the question of finding dense subgraphs in a temporal network that are allowed to vary over time to a certain degree.
In this paper, we search for dense subgraphs that have large pairwise Jaccard similarity coefficients.
More formally,
given a set of graph snapshots and a weight $\lambda$, we find a collection of dense subgraphs such that
the sum of densities of the induced subgraphs plus the sum of Jaccard indices, weighted by $\lambda$, is maximized.
We prove that this problem is  \np-hard. To discover dense subgraphs with good objective value, we present an iterative algorithm which runs in  $\bigO{n^2k^2 + m \log n + k^3
n}$ time per single iteration, and a greedy algorithm which runs in
$\bigO{n^2k^2 + m \log n + k^3 n}$ time, where $k$ is
the length of the graph sequence and $n$ and $m$ denote number of nodes and
total number of edges respectively. 
We show experimentally that our algorithms are efficient, they can find ground truth in synthetic datasets and provide interpretable results
from real-world datasets. Finally, we present a case study that shows the usefulness of our problem.

\end{abstract}

%\begin{IEEEkeywords}
%temporal networks, block models, segmentation, recurrent
%\end{IEEEkeywords}

\section{Introduction}

Finding dense subgraphs is a core problem in graph mining with many applications in diverse domains such as social network analysis~\cite{semertzidis2019finding}, temporal pattern mining in financial markets~\cite{xiaoxi2009migration}, and  biological system analysis~\cite{fratkin2006motifcut}. 
Often, many real-world networks
vary over time, in which case a sequence of graph snapshots naturally exists. Consequently, mining dense subgraphs over time  has gained an attention in data mining literature~\cite{jethava2015finding,semertzidis2019finding,rozenshtein2020finding,galimberti2020core}. 

Our goal is to find dense subgraphs in a temporal network.
In order to measure density, we
will use a popular choice of the ratio between number of induced edges and nodes.
This choice is popular since the densest subgraph can be found in polynomial time~\citep{goldberg1984finding} and approximated efficiently~\citep{charikar2000greedy}.

Given a graph sequence, there are natural extremes to find the densest subgraphs: the first approach is to find a common subgraph that maximizes the sum of densities for individual snapshots,
as proposed by~\citet{semertzidis2019finding} among other techniques. The other approach is to find the densest subgraphs for each snapshot individually.

In this paper, we study the problem that bridges the gap between these two extremes,
namely we seek
dense subgraphs in a temporal network
that are allowed to vary over time to a certain degree.
Our approach is to incorporate the Jaccard similarity index directly
into our objective function along with the density. 
Here, we reward similar graphs over the snapshots.
More formally,
given a graph sequence $\calG$ and parameter
$\lambda$, we seek a sequence of subgraphs,  such that the sum of densities
plus the sum of Jaccard indices, weighted by $\lambda$,  
is maximized.

We demonstrate the objective in the following toy example.

\begin{example}
\label{ex:toy-soft}

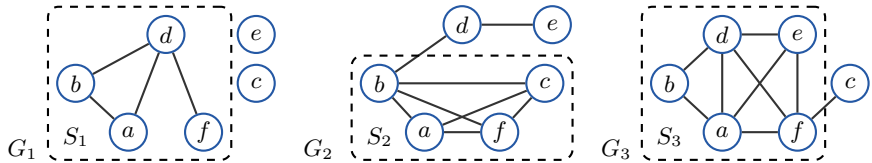
\begin{figure}[t]
\tikzstyle{exnode} = [thick, draw = yafcolor5, circle, inner sep = 1pt, text=black, minimum width=14pt]
\tikzstyle{exedge} = [black!80, thick]
\tikzstyle{exbox} = [dashed, black, thick, draw, rounded corners]
\hspace*{\fill}
\begin{tikzpicture}[baseline = 0pt, yscale=0.65]
\node[anchor = east, inner sep = 0pt] at (-0.7, -1.35) {$G_1$};
\node[exnode] (n1) at (0.5, -1) {$a$};
\node[exnode] (n2) at (-0.2, 0) {$b$};
\node[exnode] (n3) at (2.2, 0) {$c$};
\node[exnode] (n4) at (1, 1) {$d$};
\node[exnode] (n8) at (2.2, 1) {$e$};

\node[exnode] (n5) at (1.5, -1) {$f$};

\node[inner sep = 0pt] (l1) at (-0.2, -1.1) {$S_1$};

\draw[exedge, bend left = 0] (n1) edge (n2);
\draw[exedge, bend left = 0] (n1) edge (n4);
% \draw[exedge, bend left = 0] (n1) edge (n5);
% \draw[exedge, bend left = 0] (n2) edge (n3);
\draw[exedge, bend left = 0] (n2) edge (n4);

% \draw[exedge, bend left = 0] (n3) edge (n8);
\draw[exedge, bend left = 0] (n4) edge (n5);

% \draw[exedge, bend left = 0] (n5) edge (n3);
% \draw[exedge, bend left = 0] (n5) edge (n6);

\node[exbox, fit = (n1) (n2) (n4) (l1) (n5)] (b1) {};
\end{tikzpicture}
\hfill
\begin{tikzpicture}[baseline = 0pt, yscale=0.65]
\node[anchor = east, inner sep = 0pt] at (-0.7, -1.35) {$G_2$};
\node[exnode] (n1) at (0.5, -1) {$a$};
\node[exnode] (n2) at (-0.1, 0) {$b$};
\node[exnode] (n3) at (2.1, 0) {$c$};
\node[exnode] (n4) at (1, 1.2) {$d$};
\node[exnode] (n8) at (2.2, 1.2) {$e$};

\node[exnode] (n5) at (1.5, -1) {$f$};

\node[inner sep = 0pt] (l1) at (-.1, -1.1) {$S_2$};

\draw[exedge, bend left = 0] (n1) edge (n2);
\draw[exedge, bend left = 0] (n1) edge (n3);
% \draw[exedge, bend left = 0] (n1) edge (n4);
\draw[exedge, bend left = 0] (n1) edge (n5);
\draw[exedge, bend left = 0] (n2) edge (n3);
\draw[exedge, bend left = 0] (n2) edge (n4);
\draw[exedge, bend left = 0] (n2) edge (n5);
% \draw[exedge, bend left = 0] (n3) edge (n4);
%\draw[exedge, bend left = 0] (n3) edge (n8);
\draw[exedge, bend left = 0] (n4) edge (n8);
% \draw[exedge, bend left = 0] (n4) edge (n5);

\draw[exedge, bend left = 0] (n5) edge (n3);
% \draw[exedge, bend left = 0] (n5) edge (n6);

\node[exbox, fit = (n1) (n2) (n3)  (l1) (n5)] (b1) {};
\end{tikzpicture}
\hfill
\begin{tikzpicture}[baseline = 0pt, yscale=0.65]
\node[anchor = east, inner sep = 0pt] at (-0.7, -1.35) {$G_3$};
\node[exnode] (n1) at (0.5, -1) {$a$};
\node[exnode] (n2) at (-0.2, 0) {$b$};
\node[exnode] (n3) at (2.2, 0) {$c$};
\node[exnode] (n4) at (0.5, 1) {$d$};
\node[exnode] (n8) at (1.5, 1) {$e$};

\node[exnode] (n5) at (1.5, -1) {$f$};

\node[inner sep = 0pt] (l1) at (-.2, -1.1) {$S_3$};

\draw[exedge, bend left = 0] (n1) edge (n2);
\draw[exedge, bend left = 0] (n1) edge (n4);
\draw[exedge, bend left = 0] (n1) edge (n5);
\draw[exedge, bend left = 0] (n1) edge (n8);
\draw[exedge, bend left = 0] (n2) edge (n4);

\draw[exedge, bend left = 0] (n4) edge (n5);
\draw[exedge, bend left = 0] (n4) edge (n8);

\draw[exedge, bend left = 0] (n5) edge (n3);
\draw[exedge, bend left = 0] (n5) edge (n8);

\node[exbox, fit = (n1) (n2)  (n4) (l1)  (n5) ] (b1) {};
\end{tikzpicture}
\hspace*{\fill}
\caption{Toy graphs used in Example \ref{ex:toy-soft}}
\label{fig:toy}
\end{figure}

Consider a graph sequence $\{G_1,G_2,G_3\}$ shown in Figure~\ref{fig:toy}, 
each graph consisting of $6$ vertices and varying edges. 
We denote  the density induced by the vertex set $S_i$  by $\dens{S_i}$, defined as
the ratio between number of induced edges and nodes, $\dens{S_i} = \frac{\abs{E(S_i)}}{\abs{S_i}}$.

Given a weight parameter $\lambda$ and a sequence of subgraphs $\set{S_1, S_2, S_3}$, we define our objective function as
$ \sum_{i=1}^{3} \dens{S_i} + \lambda \sum_{i < j} \jaccard{S_i, S_j}$. Note that Jaccard index between set $S$ and $T$ is defined as $\jaccard{S, T} = \frac{|S \cap T|}{|S \cup T|}$.
	
Assume that we set $\lambda = 0.3$
and select $S_1 = \{a,b,d,f\}$, $S_2 = \{a,b,c,f\}$, and $S_3 = \{a,b,d,e,f\}$.
The sum of densities is $\frac{4}{4} + \frac{6}{4} + \frac{8}{5} = 4.1$. 
Here, $\jaccard{S_1, S_2} = \frac{3}{5} = 0.6$, $\jaccard{S_1, S_3} = \frac{4}{5} = 0.8$, and $\jaccard{S_2, S_3} = \frac{3}{6} = 0.5$. 
Therefore, our objective is equal to $4.1 + 0.3 \times (0.6 + 0.8 + 0.5) = 4.67$.
\end{example}

We show that our problem is \np-hard and consequently propose two greedy algorithms.
The first approach is an iterative algorithm
where we start either with the common densest subgraph or a set of the densest subgraphs for each individual snapshot and then iteratively  improve each individual snapshot. The improvement step is done with a classic technique used to approximate
the densest subgraph (in a single graph)~\citep{asahiro2000greedily,charikar2000greedy}. We start with the complete snapshot,
and iteratively seek out the vertex so that the remaining graph yields the largest score.
We remove the vertex, and the procedure is repeated until no vertices remain; the best subgraph for
that snapshot is selected. This is repeated for each snapshot until no improvement is possible.
The second algorithm uses a similar approach except we consider all snapshots at the same time,
and select the best subgraph once we are done.

The appeal of this approach is that, when dealing with a single graph, finding the next vertex
can be done efficiently using a priority queue~\citep{charikar2000greedy,asahiro2000greedily}. We cannot use this approach directly due to the updates
in Jaccard indices. Instead we maintain a \emph{set} of priority queues that allow us to find vertices
quickly in practice.

The remainder of the paper is organized as follows.
In Section \ref{sec:prel}, we  provide preliminary notation along with the
formal definitions of our optimization problem. Next, we prove
\np-hardness of our problem in Section~\ref{sec:compl}. All our
algorithms and their running times are  presented in Section~\ref{sec:algorithm}. Related work is
discussed in Section~\ref{sec:related}. Section~\ref{sec:exp} contains an
experimental study both with synthetic and real-world datasets. 
Finally, Section~\ref{sec:conclusions} summarizes the paper and
provides directions for the future work.

\section{Preliminary notation and problem definition}\label{sec:prel}

We begin by providing preliminary notation and formally defining our problem.

Our input is a sequence of graphs $\calG = \set{G_1, \ldots, G_k}$, where each snapshot $G_i = (V, E_i)$
is defined over the same set of nodes. We often denote the number of nodes and edges by $n = \abs{V}$ and 
$m_i = \abs{E_i}$, or $m = \abs{E}$, if $i$ is omitted.

Given a graph $G = (V, E)$, and a set of nodes $S \subseteq V$, we define $E(S) \subseteq E$ to be the subset of edges
having both endpoints in $S$.

As mentioned before, our goal is to find dense subgraphs in a temporal network, and for that
we need to quantify the density of a subgraph. More formally,
assume an unweighted graph $G = (V, E)$, and let $S \subseteq V$.
We define the \emph{density} $\dens{S}$ of a single node set $S$
and  extend this definition for a sequence of subgraphs $\calS = \set{S_1, \ldots, S_k}$
by writing
\[
	\dens{S_i} = \frac{\abs{E(S_i)}}{\abs{S_i}}
	\quad\quad \text{and}\quad\quad
	\dens{\calS} = \sum_{i = 1}^k \dens{S_i}
	\quad.
\]

%as 
%\[
	%\dens{S} = \frac{\abs{E(S)}}{\abs{S}} \quad.
%\]
%We extend this definition for a sequence of subgraphs $\calS = \set{S_1, \ldots, S_k}$ by defining

We will use the Jaccard index in order to measure the similarity between two subgraphs. More formally,
given two sets of nodes $S$ and $T$, we write
\[
	\jaccard{S, T} = \frac{|S \cap T|}{|S \cup T|}\quad. 
\]

Ideally, we would like to have each subgraph to have high density, and share as many nodes as possible with each other. 
This leads  to the following score and optimization problem.
More specifically, given a weight parameter $\lambda$ and a sequence of subgraphs $\calS = \set{S_1, \ldots, S_k}$ we define a score
\[
	\score{\calS; \lambda} = \dens{\calS} + \lambda \sum_{i=1}^{k} \sum_{j = i + 1}^{k} \jaccard{S_i, S_j} \quad .
\]

\begin{problem}[Jaccard Weighted Densest Subgraphs ({\problemsoft})]
Given a graph sequence $\calG = \set{G_1,\dots,G_k}$, with  $G_i = (V, E_i)$,
and a real number $\lambda$, find a collection of subset of vertices
$\calS = \set{S_1,\dots,S_{k}}$,
such that $\score{\calS}$ is maximized.
\end{problem}

We will consider two extreme cases. The first case is when $\lambda$ is very large, say $\lambda = \sum m_i$, which we refer to
as \emph{densest common subgraph} or DCS. This problem can be solved by first flattening the graph sequence into one weighted graph,
where an edge weight is the number of snapshots in which an edge occurs.
The problem is then a standard (weighted) densest subgraph problem that can be solved
using the method given by~\citet{goldberg1984finding} in $\bigO{nm \log n}$ time.
The other extreme case is $\lambda = 0$ which can be found by solving the densest subgraph
problem for each individual snapshot.

The main difference from prior
studies~\cite{jethava2015finding,semertzidis2019finding} is that we allow the
subsets to be varied within a given  margin~(which is defined by Jaccard
coefficient), without enforcing subsets to be fully identical.

\section{Computational complexity}\label{sec:compl}
The problem of finding a common subgraph which maximizes the sum of densities can be solved optimally in
polynomial time. Moreover, if we set $\lambda = 0$, then we can solve the problem by 
finding optimal dense subgraphs for each snapshot individually.
Next we show that \problemsoft is \np-hard.
Note that the hardness relies on the fact that we can choose a specific $\lambda$.

\begin{proposition} 
\label{prop:np2}
\problemsoft is \np-hard.
\end{proposition}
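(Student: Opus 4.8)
The plan is to reduce from a known \np-hard problem and exploit the freedom in choosing $\lambda$, as the remark before the statement hints. A natural candidate is a restricted, still-hard variant of exact cover or three-dimensional matching, or the \textsc{Clique} problem; given the macros defined (\verb|\matchprb|, i.e.\ \textsc{3DM-2}, and \verb|\prbclique|), I would bet on reducing from \textsc{3DM-2} (3-dimensional matching where each element occurs in at most two triples) or from \prbclique. Let me sketch the \prbclique-style route first, since densest-subgraph reductions usually go through clique-like gadgets.

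First I would fix a target graph $G$ and integer $h$ for which we want to decide whether $G$ has a clique of size $h$. The idea is to build a short sequence $\calG = \set{G_1,\dots,G_k}$ (with $k$ small, perhaps $k=2$ or $k=3$) over a common vertex set, engineered so that maximizing $\score{\calS;\lambda}$ forces every $S_i$ to be (essentially) the \emph{same} vertex set — this is where a carefully tuned large-but-not-too-large $\lambda$ comes in: large enough that the Jaccard terms dominate any density gain from letting the $S_i$ drift apart, yet calibrated so the density terms still pin down which common set is chosen. Once the $S_i$ are forced equal to a common set $S$, the density part of the objective reduces to $k$ copies of a single-graph density, and by choosing the $G_i$ so that the flattened/weighted graph has a densest subgraph of a prescribed size exactly when $G$ has an $h$-clique (the standard trick: densest subgraph of an appropriately padded graph corresponds to a clique), we get the equivalence. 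I would then argue both directions: a clique of size $h$ yields a collection achieving a threshold score $\theta$, and conversely any collection with $\score{\calS;\lambda}\ge\theta$ must, by the $\lambda$-domination argument, have all $S_i$ equal and of the right size, hence give an $h$-clique.

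The key steps in order: (1) pick the source problem and its instance $(G,h)$; (2) describe the gadget construction of $\calG$ and the precise value of $\lambda$ (as a function of $n$, $m_i$, $h$) and the threshold $\theta$; (3) prove the "forcing" lemma — that any near-optimal $\calS$ has $S_1=\cdots=S_k$ — via a perturbation/exchange argument showing a strict loss in $\sum_{i<j}\jaccard{S_i,S_j}$ whenever two sets differ, a loss that the bounded density gain cannot compensate; (4) prove the reduced single-graph equivalence between the densest common subgraph and the existence of an $h$-clique; (5) check the reduction is polynomial-time.

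The main obstacle I anticipate is step (3), the forcing lemma, together with getting step (2)'s constants mutually consistent: the Jaccard index is a nonlinear, normalized quantity, so bounding "how much the objective drops when $S_i\ne S_j$" is delicate — dropping or adding a single vertex changes $|S_i\cap S_j|/|S_i\cup S_j|$ by an amount depending on the current set sizes, so I would need uniform bounds on $|S_i|$ (e.g.\ arguing an optimal solution never picks sets that are too large or too small, since isolated vertices only hurt density and Jaccard cannot rescue them). Managing this interplay — ensuring $\lambda$ is simultaneously big enough to force equality and small enough that the hardness of the underlying density/clique problem survives — is the technical heart of the argument; everything else is bookkeeping.
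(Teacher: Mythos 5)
Your main route (reduce from \prbclique by choosing $\lambda$ large enough to force $S_1 = \cdots = S_k$) cannot work, for two independent reasons. First, if the optimum is forced to have all $S_i$ equal to a common set $S$, the problem collapses to the densest common subgraph problem, which the paper itself notes is solvable in polynomial time by flattening the sequence into one weighted graph and running Goldberg's algorithm; so no \np-hardness can be extracted from the regime where equality is forced. Second, step (4) of your plan relies on a ``standard trick'' that the densest subgraph of a padded graph corresponds to an $h$-clique --- but no such trick exists for the \emph{unconstrained} density $\abs{E(S)}/\abs{S}$, which is maximized in polynomial time; the clique connection only appears for the size-constrained densest-$k$-subgraph variant, which is not the objective here. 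The hardness of \problemsoft therefore has to live in the intermediate regime where the $S_i$ are \emph{similar but not identical}, which is exactly the regime your forcing lemma is designed to eliminate.

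The paper's actual reduction is from \matchprb with a \emph{small} weight, $\lambda = 0.55/q$, where $q = 2rn^4$ is the number of auxiliary filler snapshots. Each filler snapshot's optimal subgraph is pinned to a fixed anchor set $\set{z_1,z_2,z_3,z_4}$; because there are $q$ of them, the aggregate Jaccard reward $\lambda q \approx 0.55$ for overlapping with this anchor acts as a constant-size incentive that forces each remaining subgraph $S_i$ to consist of exactly four nodes, three of them anchors. The single free node per snapshot must then be chosen among the set-nodes $v_a, v'_a$ or the item-node $w_i$, and the matching size is read off from how many $v$-nodes are shared across three subgraphs. In other words, the hardness is encoded in the combinatorial choice of which nodes the subgraphs \emph{partially} share, not in forcing them to coincide. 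If you want to salvage the write-up, develop your secondary candidate (\matchprb) along these lines rather than the clique route.
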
 
\begin{proof}
We will prove the hardness by reducing from a 2-bounded 3-set packing problem
\matchprb, a problem where we are given a set of items $U$, a family of sets $\fm{C}$ each of size $3$
such that each item in $U$ is included in exactly two sets, and are asked to find a maximum matching~\citep{chlebik2003approximation}.

Assume that we are given an instance with $r = \abs{\fm{U}}$ items and $\ell = \abs{\fm{C}}$ sets.
For each set $C_i$ we introduce two nodes $v_i$ and $v'_i$, and for each item $u_i$ we introduce a node $w_i$.
We also introduce four additional nodes $z_1$, $z_2$, $z_3$, and $z_4$. In total, we have $n = 2\ell + r + 4$ nodes.

Let $u_i$ be an item and let $C_a$ and $C_b$ be two sets containing $u_i$.
We add two snapshots: the first graph $G_i$ contains $(z_1, z_2)$, $(z_1, z_3)$, $(w_i, z_1)$, $(v_a, z_1)$, and $(v_b, z_1)$ edges
and the second graph $G_i'$ contains $(z_1, z_2)$, $(z_1, z_3)$, $(w_i, z_1)$, $(v_a', z_1)$, and $(v_b', z_1)$ edges.

We also add $q = 2rn^4$ graphs $F_j$, each with three edges $(z_1, z_2)$, $(z_1, z_3)$, and $(z_1, z_4)$. We set $\lambda = 0.55/q$.

Let $\calS$ be the optimal solution.
We claim that
\begin{equation}
\begin{split}
\label{eq:cond}
\score{\calS} \geq T_1 + T_2 + T_3, \text{ where } & T_1 = q \frac{3}{4} + \lambda {q \choose 2},\ T_2 =  2r \times 1.08,\ \text{ and } \\
& T_3 = 3 \lambda {2r \choose 2} + 8 \lambda p + 4 \lambda r / 5,
\end{split}
\end{equation}
if and only if there is a matching with $p$ sets.

Assume that Eq.~\ref{eq:cond} holds.
Let $Q_j$ be the optimal subgraphs in $F_j$. Write
$S_i$ to be the optimal subgraph in $G_i$ and $S'_i$ to be the optimal subgraph in $G_i'$.
Due to symmetry, we can safely assume that the subgraphs $Q_j$ are all equal.
Next, we claim that the densities and the Jaccard indices of $S_i$ and $S_i'$
cannot reach $q\frac{3}{4}$ and therefore, $Q_j = \set{z_1, z_2, z_3, z_4}$.
The second highest density that any $F_j$ can reach is $q\frac{2}{3}$.
To prove the claim, we argue that the densities and the Jaccard indices of $S_i$ and $S_i'$ can not reach $\Delta = q (\frac{3}{4} - \frac{2}{3}) = \frac{q}{12}$. Since $n \ge 4$, $\Delta > 42r$. There are $2r$ number of density terms for $S_i$ and $S_i'$ with each density less than 1. Since Jaccard index is always less than or equal to 1, their Jaccard terms can not exceed $\lambda (2rq + {2r \choose 2})$. Combining both density and Jaccard terms results in an amount less than $4r$ proving the claim.
Therefore densities and Jaccard indices of $Q_j$ now correspond to $T_1$.

Next, we claim that $S_i$ (and $S_i'$) consists of 4 nodes and 3 edges, two of them being $(z_1, z_2)$ and $(z_1, z_3)$.
Fix $S_i$ with $y = \abs{E(S_i)}$ edges and $x = \abs{S_i}$ nodes. Let $z = \abs{S_i \cap \set{z_1, z_2, z_3, z_4}}$ be the intersection
with  $Q_j$. We can show that the score is 
\[
	\score{\calS} = \frac{y}{x} + 0.55 \frac{z}{x + 4 - z} + R + C,
\]
where $R$ contains the Jaccard terms using $S_i$ and not $Q_j$, and $C$ contains the remaining densities and Jaccard terms not depending on $S_i$.
The first two terms form a fraction with a denominator of at most $n^2$. Consequently, any changes to $x$, $y$, and $z$ change
the first two terms by at least $n^{-4}$.
Note that $R$ contains only $2r - 1$ terms, and due to $\lambda$, we have $R < n^{-4}$. In other words, $\calS$ must optimize the first two terms.
Since there are only $6$ non-singleton nodes in $G_i$, $x \leq 6$. Moreover, $z \leq \min(3, x)$ and $y \leq \min(5, x - 1)$. Enumerating all the possible combinations
show that $x = 4$ and $z = y = 3$ yield optimal score. This is only possible if $S_i$ consists of 4 nodes and 3 edges, two of them being $(z_1, z_2)$ and $(z_1, z_3)$.
Now, densities of $S_i$ (and $S_i'$) and Jaccard indices between $S_i$ (and $S'_i$) and $Q_j$ correspond to $T_2$.

Finally, let us look now at the Jaccard terms between $S_i$ and/or $S'_j$. These terms will constitute $T_3$.
Let $a$ be the number of nodes $v_i$ or $v_i'$ that are included in 3 subgraphs; any such node will yield 3 Jaccard terms of value 1.
Let $b$ be the number of nodes $v_i$, $v_i'$, or $w_i$ that are included in 2 subgraphs; any such node will yield one Jaccard term of value 1.
The remaining Jaccard terms between $S_i$ and $S_i'$ are all of value $3/5$.
In summary, the terms are equal to
\[
	\lambda {2r \choose 2} \frac{3}{5} + \lambda a 3\pr{1 - \frac{3}{5}} + \lambda b \pr{1 - \frac{3}{5}} = \frac{\lambda 3}{5} {2r \choose 2} + \lambda a \frac{4}{5}  + \frac{\lambda (a + b) 2 } 5\quad.
\]
Assume that $a < 2p$. Then since $a + b \leq 2r$ these terms are less than $T_3$, which is a contradiction. Therefore, $a \geq 2p$.
Now, there are $p$ vertices in $\set{v_i}$ or $p$ vertices in $\set{v_i'}$ that are included in 3 subgraphs.
These sets correspond to matchings, and at least one of them will have
$p$ sets.

To prove the other direction, assume there is a matching $\fm{M}$ with $p$ sets.
To form the subgraph sequence,
we first select $(z_1, z_2)$ and $(z_1, z_3)$ in every set. For $u_i \in C_j \in \fm{M}$, we also select $(v_j, z_1)$ in $G_i$ and $(v_j', z_1)$ in $G_i'$.
For an item $u_i$ not covered by $\fm{M}$, we select $(w_i, z_1)$ in $G_i$ and $(w_i, z_1)$ in $G_i'$.
A straightforward calculation shows that this sequence yields the score given in Eq.~\ref{eq:cond}.
\qed
\end{proof}

\section{Algorithms}\label{sec:algorithm}

Since our problem is \np-hard, we need to resort to heuristics.
In this section we present two algorithms that we will use to find good subgraphs.

The first algorithm is as follows. We start with an initial cadidate $\calS$.
This set is either the solution of the densest common subgraph %(\problemdlc with $\alpha = 1$)
or the densest subgraphs of each individual snapshots; we test both and select the best end result.

%We initialize the sets using either solution of the densest common subgraph %(\problemdlc with $\alpha = 1$)
%or the densest subgraphs of each individual snapshots, whichever yields the better score.

In order to improve the initial set we employ the strategy used in~\citep{asahiro2000greedily,charikar2000greedy} when approximating the densest subgraph:
Here, the algorithm starts with the whole graph and removes a node with the minimum degree, or equivalently, removes
a node such that the remaining subgraph has the highest density. This is continued until no nodes remained, and among the tested subgraphs
the one with the highest density is selected.

We employ a similar strategy.
For a snapshot $G_i$,
we start with $S_i = V$, and then iteratively remove the vertices so that the score is maximal.
After removing all vertices, we pick the subgraph for $S_i$ which maximizes our objective $\score{\calS; \lambda}$.
We iterate over all snapshots,
we keep on modifying the sets until the algorithm converges. The pseudo-code for this approach is given in Algorithm~\ref{alg:soft-1}.

\begin{algorithm}[t!]
\caption{$\algsoftone(\calG,\lambda, \calS)$, finds subgraphs with good $\score{\cdot; \lambda}$}
\label{alg:soft-1}
    %$\calS \define$ the better of the two solutions for DCS and the densest subgraph problem\;
    
% 	$N \define \labeldic{\alpha, t, c}$\;
    %$S_1, \dots, S_{\tau}  \define $   densest common subgraph ($\alpha = 1$), $\forall i = 1,\dots, \tau$\;
% 	$S_i \define V(\alpha) \setminus N$\;
	
    \While {changes}{
	    \ForEach {$i = 1,\dots, k$} {
	        $C \define V$\;
	        \ForEach{$j = 2, \ldots, \abs{V}$} {
	            $u \define \displaystyle\arg\max_{v \in C} \score{S_1, \ldots, S_{i- 1}, C \setminus \{v\}, S_{i + 1}, \ldots, S_k}$\;
	           % , the vertex with the smallest gain
	            $C \define C \setminus \{u\}$\;
    	    }
			$S_i \define $ best tested $C$, if the score improves\; % \define \arg\displaystyle\max_{j=1\ldots \abs{V}}f(\mathcal{ S}^{(i)})$\;
        }
    }
	\Return $\calS$\;
\end{algorithm}

Our second algorithm is similar to \algsoftone.
%As opposed to the algorithms \ref{alg:soft-1}, in algorithm \ref{alg:soft-2}, we move from one snapshot to another without fully finishing the peeling of a single snapshot.
In Algorithm~\ref{alg:soft-1} we consider each snapshot separately and peel off vertices.
In our second algorithm, we initialize each $S_i$ with $V$.
In each iteration, we find a snapshot $S_i$ and a vertex $v$ so that the remaining subgraph sequence is maximized.
We remove the node and continue until no nodes are left.
In the process, we choose the one which maximizes our objective function.
The pseudo-code for this method is given in Algorithm~\ref{alg:soft-2}.

\begin{algorithm}[t!]
\caption{$\algsofttwo(\calG,\lambda)$, finds subgraphs with good $\score{\cdot; \lambda}$}
\label{alg:soft-2}
    $\calS \define S_1, \ldots, S_k$, where $S_i = V$\;
    \While {there are nodes} {
		$u, j \define \displaystyle\arg\max_{v, i \mid v \in S_i} \score{S_1, \ldots, S_{i- 1}, S_i \setminus \{v\}, S_{i + 1}, \ldots, S_k}$\;
		$S_j \define S_j \setminus \set{u}$\;
    }
	\Return best tested $\calS$\;
\end{algorithm}

The bottleneck in both algorithms is finding the next vertex to delete.
Let us now consider how we can speed up this selection.
To this end,
select $S_i$ and let $v \in S_i$. Let us write $\calS'$ to be $\calS$ with $S_i$ replaced with $S_i \setminus \set{v}$.
We can write the score difference between $\calS$ and $\calS'$ as
\begin{equation}
\label{eq:gain}
	\score{\calS'} - \score{\calS} = \frac{\abs{E(S_i)} - \deg v}{\abs{S_i} - 1} - \frac{\abs{E(S_i)}}{\abs{S_i}} + \sum_{j \neq i} \jaccard{S_i \setminus \set{v}, S_j} - \jaccard{S_i, S_j}\quad.
\end{equation}

Let us first consider \algsofttwo.
To find the optimal $v$ and $i$, we will group the nodes in $S_i$ such that the
sum in Eq.~\ref{eq:gain} is equal for the nodes in the same group.
In order to do that we group the nodes based on the following condition: if two nodes $u,v  \in S_i$ belong to the exactly same $S_j$ for each $j$, that is,
$u \in S_j$ if and only if $v \in S_j$, then $u$ and $v$ belong to the same group.
Let us write $\fm{P}$ to be the collection of all these groups (across all $i$).

Select $P \in \fm{P}$. Since the sum in Eq.~\ref{eq:gain} is constant for all nodes in $P$, the node in $P$
maximizing Eq.~\ref{eq:gain} must have the smallest degree.
Thus, we maintain the nodes in $P$ in a priority queue keyed by the degree. We also maintain the difference of the Jaccard indices, the sum in Eq.~\ref{eq:gain}.
In order to maintain the difference, we maintain the sizes of intersection $\abs{S_i \cap S_j}$ and the union $\abs{S_i \cup S_j}$ for all $i$ and $j$.
To find the optimal $v$ and $i$, we find the vertex with the smallest degree in each group, and then compare these candidates among different groups.

This data structure leads to the following running time.

\begin{proposition}
Assume a graph sequence $G_1, \ldots, G_k$ with $n$ nodes and total $m = \sum_i m_i$ edges.
Let $\fm{P}_{ir}$ be the groups of $S_i$ (based on the node memberships in other snapshots) when deleting $r$th node.
Define $\Delta = \max \abs{\fm{P}_{ir}}$.
Then the running time of \algsofttwo is in
\[
	\bigO{nk^2 \Delta + m \log n + k^2 n(k + \log n)} \subseteq \bigO{n^2k^2 + m \log n + k^3 n}\quad.
\]
\label{prop:soft1}
\end{proposition}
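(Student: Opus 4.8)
Proof plan. The plan is to charge the work to the $\bigO{nk}$ iterations of the while loop — one node leaves some $S_j$ in each pass and $\sum_i\abs{S_i}=nk$ at the start — plus a single quantity that we amortize over the whole run. For every snapshot $i$ we keep: (a)~one priority queue per group $P\in\fm{P}_i$, keyed by the induced degree $\deg v$ of $v$ in $S_i$; (b)~the numbers $\abs{E(S_i)}$ and $\abs{S_i}$; (c)~the $\bigO{k^2}$ counters $\abs{S_i\cap S_j}$ and $\abs{S_i\cup S_j}$; and (d)~attached to each group $P$ of $S_i$, the value $\sum_{l\ne i}\bigl(\jaccard{S_i\setminus\set{v},S_l}-\jaccard{S_i,S_l}\bigr)$, which by the definition of the groups is the same for every $v\in P$. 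We also store the score attained at each step (an $\bigO 1$ update via Eq.~\ref{eq:gain}) so that the best tested $\calS$ can be replayed at the end rather than copied each step. Initialising all of this costs $\bigO{m+nk+k^2}$, since every $S_i$ equals $V$ (one group per snapshot, its heap built in $\bigO n$ time from degrees computed in $\bigO m$ time) and all the counters equal $n$.

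Fix one iteration and the pair $(u,j)$ it selects. To evaluate the $\arg\max$ of \algsofttwo{} we use that, restricted to one group of $S_i$, the Jaccard part of Eq.~\ref{eq:gain} is constant and the density part $\tfrac{\abs{E(S_i)}-\deg v}{\abs{S_i}-1}-\tfrac{\abs{E(S_i)}}{\abs{S_i}}$ is decreasing in $\deg v$, so the best candidate of the group is the minimum of its heap and its gain is assembled in $\bigO 1$ from (b) and (d); scanning all $\sum_i\abs{\fm{P}_i}\le k\Delta$ groups therefore costs $\bigO{k\Delta}$. Having committed to removing $u$ from $S_j$, we (i)~decrement, inside $S_j$, the degree of each of the at most $\deg u$ neighbours of $u$ and update its heap key in $\bigO{\log n}$, correcting $\abs{E(S_j)}$ and $\abs{S_j}$ as well; and (ii)~update the counters $\abs{S_j\cap S_i}$, $\abs{S_j\cup S_i}$ for all $i$ and refresh the value~(d) of every group affected by the change in $S_j$ — for a group of $S_i$ with $i\ne j$ only its $l=j$ summand changes, while for the $\le\Delta$ groups of $S_j$ all summands change — all of which is $\bigO{k\Delta}$.

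The remaining effect is that the deletion changes the grouping itself, and this is the step that must be handled with care. The crucial observation is that the partition of $S_i$ is defined by the membership patterns of its nodes in the \emph{other} snapshots, so removing $u$ from $S_j$ alters the pattern of no node except $u$ and alters the groups of $S_j$ itself not at all (the node $u$ simply departs its group there, an $\bigO{\log n}$ heap deletion). Hence the only relocations are those of the single node $u$, and only inside the at most $k-1$ snapshots $S_i$ with $i\ne j$ and $u\in S_i$; for each such $i$ we move $u$ to its (possibly brand-new) group — one heap deletion and one heap insertion, $\bigO{\log n}$ — and, when the target group is new, compute its value~(d) from scratch in $\bigO k$ time. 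This contributes $\bigO{k(k+\log n)}$ per iteration.

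Summing up, the degree updates of type (i) are charged to edges: a complete peeling of snapshot $j$ performs at most $m_j$ of them (each edge of $E_j$ is counted when its first endpoint leaves), so across the whole run they cost $\bigO{m\log n}$; everything else is $\bigO{nk}$ times $\bigO{k\Delta+k(k+\log n)}$, i.e.\ $\bigO{nk^2\Delta+k^2n(k+\log n)}$. Adding the $\bigO{m+nk+k^2}$ set-up yields the claimed $\bigO{nk^2\Delta+m\log n+k^2n(k+\log n)}$, and the stated inclusion follows from $\Delta\le\abs{S_i}\le n$ and $\log n\le n$, which turn the $k^2\Delta$ and $k^2\log n$ factors into $k^2n$. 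I expect the genuinely delicate points to be the analysis of the membership update — a naive re-grouping after each deletion would not meet the bound — and the amortized (rather than worst-case per-iteration) accounting of the heap-key updates.
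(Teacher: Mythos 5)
Your proposal is correct and follows essentially the same route as the paper: per-group priority queues keyed by degree, maintained intersection/union counters, an $\bigO{k\Delta}$ scan to pick the best candidate, an $\bigO{k\Delta}$ update of the cached Jaccard gains (one term per foreign group, all $k-1$ terms for the $\le\Delta$ groups of the affected snapshot), relocation of only the deleted node across its $\bigO{k}$ queues in $\bigO{k+\log n}$ each, and the degree updates amortized to $\bigO{m\log n}$ over the whole run. The extra details you supply (initialization cost, the observation that only $u$'s membership pattern changes, replaying the best $\calS$) are consistent refinements rather than a different argument.
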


\begin{proof}
Finding the best node $u$ and the snapshot $S_i$ requires $\bigO{\sum_i \abs{\fm{P}_{ir}}} \in \bigO{k \Delta }$ time.
Consider deleting $u$ from $S_i$.

Deleting $u$ from its queue requires $\bigO{\log n}$ time.
Upon deletion, we update the degrees of the neighboring nodes in the corresponding queues, in \emph{total} time of $\bigO{m \log n}$.
Updating the intersection and the union sizes requires $\bigO{k}$ time.

We also need to update the gain coming from Jaccard indices for each group $P$. Only one term changes if $P$ is not a subset of $S_i$; there are at most $k\Delta$ such groups.
Otherwise, if $P \subseteq S_i$, then $k-1$ terms change; there are at most $\Delta$ such groups. In summary, we need $\bigO{k \Delta}$ time.

Node $u$ is included in $\bigO{k}$ queues. As we remove $u$ from $S_i$, these queues need to be updated by moving $u$ to the correct queue.
A single such update requires deleting $u$ from its current queue, finding (and possibly creating) the new queue,
and adding $u$ to it. This can be done in $\bigO{k + \log n}$ time.

Combining these times proves the claim.
\qed
\end{proof}

We should point out that the running time depends on $\Delta$, the number of queues in a single snapshot.
This number may be as high the number of vertices, $n$, but ideally $\Delta \ll n$.

The same data structure can be also used \algsoftone.
The only difference is that we do not select optimal $i$; instead $i$ is fixed when looking for the next vertex to delete.
Trivial adjustments to the proof of Prop.~\ref{prop:soft1} imply the following claim.

\begin{proposition}
Assume a graph sequence $G_1, \ldots, G_k$ with $n$ nodes and total $m = \sum_i m_i$ edges.
Let $\fm{P}_{ir}$ be the groups of $S_i$ (based on the node memberships in other snapshots) when deleting $r$th node.
Define $\Delta = \max \abs{\fm{P}_{ir}}$.
Then the running time of a single iteration of \algsoftone is in
\[
	\bigO{nk^2 \Delta + m \log n + k^2 n(k + \log n)} \subseteq \bigO{n^2k^2 + m \log n + k^3 n}\quad.
\]
\label{prop:soft2}
\end{proposition}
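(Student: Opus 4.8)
The plan is to reuse the proof of Proposition~\ref{prop:soft1} almost verbatim, since \algsoftone and \algsofttwo differ only in which vertex they peel next. First I would record the structural correspondence between the two algorithms: one iteration of the outer \textbf{while} loop of \algsoftone consists of exactly $k$ peeling passes, one per snapshot, and each pass removes $n-1$ vertices from a working set initialized to $V$. Hence a single iteration performs $\bigO{nk}$ vertex removals in total — precisely the same order as the total number of removals \algsofttwo performs over its entire run. Since the per-removal bookkeeping is what drives the running time, every cost term in the proof of Proposition~\ref{prop:soft1} is incurred the same number of times here, and I would simply re-import those estimates.

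Concretely, I would use the identical data structure: the set of degree-keyed priority queues over the groups $\fm{P}_{ir}$, together with the maintained values $\abs{S_i \cap S_j}$, $\abs{S_i \cup S_j}$, and the per-group Jaccard-gain contributions to Eq.~\ref{eq:gain}. Each per-removal cost then carries over unchanged: updating the neighbors' degrees in their queues amortizes to $\bigO{m \log n}$ over one iteration, because the pass for snapshot $i$ touches only the edges of $G_i$ and $\sum_i m_i \log n = m \log n$; updating the intersection and union sizes costs $\bigO{k}$ per removal; recomputing the Jaccard-gain terms of the affected groups costs $\bigO{k\Delta}$ per removal; and re-bucketing the removed vertex among the $\bigO{k}$ queues that contain it costs $\bigO{k(k+\log n)}$ per removal. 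Summed over the $\bigO{nk}$ removals of an iteration, these give $\bigO{m\log n}$, $\bigO{nk^2}$, $\bigO{nk^2\Delta}$, and $\bigO{nk^2(k+\log n)}$ respectively.

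The only genuine difference to address is the choice of the next vertex. In \algsofttwo one maximizes over all pairs $(v,i)$, costing $\bigO{\sum_i \abs{\fm{P}_{ir}}} = \bigO{k\Delta}$ per step; in \algsoftone the snapshot index $i$ is fixed throughout a pass, so one only inspects the degree-minimum candidate of each of the $\bigO{\abs{\fm{P}_{ir}}} = \bigO{\Delta}$ groups of that single snapshot, i.e.\ $\bigO{\Delta}$ per step and $\bigO{nk\Delta}$ over the iteration — dominated by the terms above. The one extra ingredient not present in \algsofttwo is that each pass restarts the working set at $V$; I would handle this symmetrically to removals, as $\bigO{n}$ queue \emph{insertions} per pass and hence $\bigO{nk}$ per iteration, each costing $\bigO{k(k+\log n)}$, which again fits inside the bound. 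Adding everything up yields $\bigO{nk^2\Delta + m\log n + k^2 n(k+\log n)}$, and substituting $\Delta \le n$ and discarding lower-order terms gives the stated $\bigO{n^2k^2 + m\log n + k^3 n}$. The only points requiring care — and the closest thing to an obstacle — are verifying that the degree-update amortization still goes through once it is spread across the $k$ passes of a single iteration, and checking that the per-pass re-initialization to $V$ does not dominate; both hold because each is bounded by the removal count $\bigO{nk}$ times a cost already budgeted for.
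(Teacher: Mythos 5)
Your proposal is correct and follows essentially the same route as the paper: the paper's own ``proof'' of Proposition~\ref{prop:soft2} is just the remark that trivial adjustments to the proof of Proposition~\ref{prop:soft1} suffice, and you have spelled out exactly those adjustments (same $\bigO{nk}$ removal count per iteration, same per-removal costs, cheaper $\bigO{\Delta}$ selection since $i$ is fixed, plus the per-pass re-initialization, all absorbed by the stated bound). No gaps.
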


%\begin{proof}
%The proof of Proposition~\ref{prop:soft1} with trivial adjustments proves the claim.
%\end{proof}

\section{Related work}\label{sec:related}
In this section we discuss
previous studies on discovering the densest
subgraph in a single graph, the densest common subgraph over multiple graphs, overlapping densest subgraphs, and other types of density measures. 
% We present prior work on these topics.

\textbf{The densest subgraph:}
Given an undirected graph, finding the subgraph which maximizes density has
been first studied by  Goldberg~\cite{goldberg1984finding} where an exact,
polynomial time algorithm which solves a sequence of min-cut instances is
presented. \citet{asahiro2000greedily} provided a linear time, greedy algorithm
proved to be an  $1/2$-approximation algorithm by
Charikar~\cite{charikar2000greedy}. The idea of the algorithm is that
at each iteration, a vertex with minimum degree is removed, and then the densest
subgraph among all the produced subgraphs is chosen.
\iffalse
If we do this naively it would take $\bigO{n^2}$ time, however we can maintain
a  priority queue to search the minimum degree vertex which can be done in
$\bigO{\log n}$ time, for an $n$ element list. Then,  the degrees of only  the
neighbours of the removed vertex should be updated. Therefore it requires only
$\bigO{m + n \log n}$ time.  We take similar approach in speeding up our
algorithms.
\fi

Several variants of the densest subgraph problem constrained
on the size of the subgraph~$\abs{S}$ have been studied: finding the densest
$k$-subgraph~($ \abs{S} = k$)~\cite{feige2001dense, khot2006ruling,
asahiro2000greedily},  \emph{at most} $k$-subgraph~($\abs{S} \leq
k$)~\cite{andersen2009finding, khuller2009finding}, and \emph{at least}
$k$-subgraph~($ \abs{S} \geq
k$)~\cite{andersen2009finding,khuller2009finding}. Unlike the densest subgraph
problem, when the size constraint is applied, the densest $k$-subgraph problem
becomes \np-hard~\cite{feige2001dense}. Furthermore,  there is no polynomial
time approximation scheme~(PTAS)~\cite{khot2006ruling}. Approximating the
problem of finding at most $k$-subgraph is  shown as hard as the densest
$k$-subgraph problem  by ~\citet{khuller2009finding}. To find exactly $k$-size
densest subgraph, \citet{bhaskara2010detecting} gave an $\bigO{n^{1/4 +
\epsilon}}$-approximation algorithm for every $\epsilon > 0$ that runs in
$n^{\bigO{1/\epsilon}}$ time. 
\citet{andersen2009finding} provided a linear time $1/3$-approximation algorithm for at least $k$ densest subgraph problem.  %Nevertheless the hardness of finding at least $k$-subgraph is yet unknown. 

\textbf{The densest common subgraph over multiple graphs:}
\citet{jethava2015finding} extended the densest subgraph problem~(DCS) for the case of multiple graph snapshots.
As a measure the authors' goal was to maximize the minimum density.
Moreover, \citet{semertzidis2019finding} introduced several variants of this
problem by varying the aggregate function of the optimization problem, 
one variant, BFF-AA, is same as the DCS problem discussed in Section~\ref{sec:prel}.
DCS can be solved exactly through a reduction to the
densest subgraph problem, and is consequently polynomial.  The hardness of DCS variants has been
addressed~\cite{charikar_on_finsing_dcs}.
 % \citet{reinthal2016finding} presents a  Lagrangian relaxation for Linear program proposed by \citet{jethava2015finding} for DCS problem.
% TKDD \cite{galimberti2020core}

\textbf{Overlapping densest subgraphs of a single graph:}
Finding multiple dense subgraphs in a single graph which allows graphs to be overlapped is studied by adding a hard
constraint to control the overlap of
subgraphs~\cite{balalau2015finding}.
Later, \citet{galbrun2016top} formulated the same problem  adding a penalty in
the objective function for the overlap. 
The difference of our problem to the works of
\citet{balalau2015finding} and \citet{galbrun2016top} is that our goal is to
find a collection of dense subgraphs 
over multiple graph snapshots~(one dense subgraph for each graph snapshot) while they discover a set of dense subgraphs within a single graph. 
Due to this difference, we want to reward similar subgraphs while the authors want to penalize similar subgraphs.

% Hence, we propose a novel problem that of finding dense subgraphs over a sequence of graph snapshots where each pair of dense subgraph  is constrained by Jaccard coefficient which is given as an input parameter to the problem. We call this as unweighted problem. Our other variant comes with a Lagrangian relaxation to the unweighted problem.

\textbf{Other density measures:}
We use the ratio of edges over the nodes as our measures as it allows us to
compute it efficiently. Alternative measures have been also considered.  One
option is to use the proportion of edges instead, that is, $\abs{E} / {\abs {V}
\choose 2}$. The issue with this measure is that a single edge yields the
highest density of 1. Moreover, finding the largest graph with the edge
proportion of 1 is equal to finding a clique, a classic problem that does not
allow any good approximation~\citep{hastad1996clique}.
As an alternative approach, \citet{tsourakakis2013denser}
proposed finding subgraphs with large score $\abs{E} - \alpha {\abs {V} \choose 2}$.
Optimizing this measure is an \np-hard problem but an algorithm similar to the one given by~\citet{asahiro2000greedily}
leads to an additive approximation guarantee. In similar vein, \citet{tatti2019density} considered
subgraphs maximizing  $\abs{E} - \alpha \abs {V}$ and showed that they form nested structure
similar to $k$-core decomposition.
An alternative measure called
triangle-density has been proposed by~\citet{tsourakakis2015k} as a ratio of triangles 
and the nodes, possibly producing smaller graphs. Like the density, optimizing
this measure can be done in polynomial time.
%However, computing this measure involves
%discovering all triangles which may be computationally prohibitive and sampling approaches
%should be used.
We leave adopting these measures as a future work.

\section{Experimental evaluation}\label{sec:exp}

\begin{table}[t!]
\setlength{\tabcolsep}{0pt}
\caption{Characteristics of  synthetic datasets. Here, $\abs{V^d}$ and $\abs{V^s}$ give initial number of dense and sparse vertices respectively, $\mean{}{\abs{E}}$ is the expected number of edges, $k$  is the number of snapshots,
$p_d$, $p_s$, and $p_c$ gives the dense, sparse, and cross edge probabilities,
 $J_{min} = \min_{i < j} J(V^d_i, V^d_j)$ is the minimum Jaccard index between ground truth dense sets of vertices, $d_{true}$ is the ground truth density of dense components, $d_{dcs}$ gives the density of densest common subgraph, and $d_{ind}$ gives the sum of densities of locally densest subgraph from each graph snapshot.
}

\label{tab:stats1}
\pgfplotstabletypeset[
    begin table={\begin{tabular*}{\textwidth}},
    end table={\end{tabular*}},
    col sep=comma,
	columns = {name,n_d,n_s,m,tau,den_pro,spars_pro,cross_pro,true_den,dcs_den,dense_den,min_jac_gnd},
    columns/name/.style={string type, column type={@{\extracolsep{\fill}}l}, column name=\emph{Dataset}},
    columns/n_d/.style={fixed, set thousands separator={\,}, column type=r, column name=$\abs{V^d}$},
    columns/n_s/.style={fixed, set thousands separator={\,}, column type=r, column name=$\abs{V^s}$},
    columns/m/.style={fixed, set thousands separator={\,}, column type=r, column name=$\mean{}{\abs{E}}$},
    columns/tau/.style={fixed, set thousands separator={\,}, column type=r, column name=$k$},
    columns/lam/.style={fixed, set thousands separator={\,}, column type=r, column name=$\lambda$},
    columns/true_den/.style={fixed, set thousands separator={\,}, column type=r, column name=$d_{true}$}, 
    columns/dcs_den/.style={fixed, set thousands separator={\,}, column type=r, column name=$d_{dcs}$}, 
    columns/dense_den/.style={fixed, set thousands separator={\,}, column type=r, column name=$d_{ind}$}, 
    columns/min_jac_gnd/.style={fixed, set thousands separator={\,}, column type=r, column name=$J_{min}$}, 
    columns/itr/.style={fixed, set thousands separator={\,}, column type=r, column name=$itr$},
    columns/den_pro/.style={fixed,set thousands separator={\,},precision=4, column type=r, column name=$p_{d}$},
    columns/spars_pro/.style={fixed,set thousands separator={\,},precision=4, column type=r, column name=$p_{s}$},
    columns/cross_pro/.style={fixed,set thousands separator={\,},precision=4, column type=r, column name=$p_{c}$},
    every head row/.style={
		before row={\toprule},
			after row=\midrule},
    every last row/.style={after row=\bottomrule},
]
{c1.csv}
\end{table}

\begin{table}[t!]
\setlength{\tabcolsep}{0pt}
\caption{
Computational statistics from the experiments for  synthetic datasets using
\algsoftone and \algsofttwo  algorithms. Here, $\lambda$ is the parameter in
$\score{\cdot; \lambda}$, $i$ is the number of iterations using \algsoftone,
columns
$d_{dis}$ and $\score{}$ are the sum of densities and scores of the discovered sets, $J_{min}$ provide the minimum Jaccard index between discovered sets, columns $\rho$ give the  average Jaccard index between discovered and ground truth sets, and  columns $time$ give the computational time in seconds.
}

\label{tab:stats-syn-soft}
\pgfplotstabletypeset[
    begin table={\begin{tabular*}{\textwidth}},
    end table={\end{tabular*}},
    col sep=comma,
	columns = {dataset,l_alpha,
	dis_d_1,
	dis_obj_1,
	min_jac_dis_1,
	avg_jac_1,
	time_1,
	itr_1,
	dis_d_2,
	dis_obj_2,
	min_jac_dis_2,
	avg_jac_2,
	time_2},
    columns/dataset/.style={string type, column type={@{\extracolsep{\fill}}l}, column name=\emph{Data}},
    columns/n/.style={fixed, set thousands separator={\,}, column type=r, column name=$\abs{V}$},
    columns/m/.style={fixed, set thousands separator={\,}, column type=r, column name=$Exp[\abs{E}]$},
    columns/time_0/.style={fixed, set thousands separator={\,}, column type=r, column name=$time$},
    columns/time_1/.style={fixed, set thousands separator={\,}, column type=r, column name=$time$},
    columns/time_2/.style={fixed, set thousands separator={\,}, column type=r, column name=$time$},
    columns/l_alpha/.style={fixed, set thousands separator={\,}, column type=r, column name=$\lambda$},
    columns/true_den/.style={fixed, set thousands separator={\,}, column type=r, column name=$d_{true}$}, 
    columns/dcs_den/.style={fixed, set thousands separator={\,}, column type=r, column name=$d_{dcs}$}, 
    columns/dense_den/.style={fixed, set thousands separator={\,}, column type=r, column name=$d_{den}$}, 
    columns/min_jac_dis_0/.style={fixed, set thousands separator={\,}, column type=r, column name=$J_{min}$}, 
    columns/min_jac_dis_1/.style={fixed, set thousands separator={\,}, column type=r, column name=$J_{min}$}, 
    columns/min_jac_dis_2/.style={fixed, set thousands separator={\,}, column type=r, column name=$J_{min}$}, 
    columns/avg_jac_0/.style={fixed, set thousands separator={\,}, column type=r, column name=$\rho$}, 
    columns/avg_jac_1/.style={fixed, set thousands separator={\,}, column type=r, column name=$\rho$}, 
    columns/avg_jac_2/.style={fixed, set thousands separator={\,}, column type=r, column name=$\rho$}, 
    columns/itr_0/.style={fixed, set thousands separator={\,}, column type=r, column name=$itr$},
    columns/itr_1/.style={fixed, set thousands separator={\,}, column type=r, column name=$i$},
    columns/dis_d_0/.style={fixed,set thousands separator={\,}, column type=r, column name=$d_{dis}$},
    columns/dis_d_1/.style={fixed,set thousands separator={\,}, column type=r, column name=$d_{dis}$},
    columns/dis_d_2/.style={fixed,set thousands separator={\,}, column type=r, column name=$d_{dis}$},
    columns/dis_obj_1/.style={fixed,set thousands separator={\,}, column type=r, column name=$\score{}$},
    columns/dis_obj_2/.style={fixed,set thousands separator={\,}, column type=r, column name=$\score{}$},
    every head row/.style={
		before row={\toprule
		& &
		\multicolumn{6}{l}{\algsoftone} &
		\multicolumn{5}{l}{\algsofttwo} \\
		\cmidrule(r){3-8}
		\cmidrule(r){9-13}
		},
			after row=\midrule},
    every last row/.style={after row=\bottomrule},
]
{syn-soft.csv}
\end{table}

The goal in this section is to experimentally evaluate  our
algorithms. We first generate several synthetic datasets  and plant  dense subgraph components, in each snapshot and test how well our algorithms discover the ground truth.
Next we study the performance of the algorithm on real-world temporal datasets in terms of running time.
We compare our results with the solutions obtained with the densest common subgraph~\cite{semertzidis2019finding} and the sum of densities of locally dense subgraphs~\cite{goldberg1984finding}. Finally, we present results of a case study.
% BFF-AA

We implemented the algorithms in Python\footnote{The source code is available at \url{https://version.helsinki.fi/dacs/}.
\label{foot:code}} and performed the experiments using a 2.4GHz Intel Core i5 processor and 16GB RAM.

\textbf{Synthetic datasets:}
% \textbf{Generation of synthetic datasets:}
% We generated $5$  synthetic datasets in total; each of which is generated independently as follows: 
Next, we explain in detail how the synthetic datasets are generated, the statistics and the related parameters are given in Table~\ref{tab:stats1}. 

Each dataset consists of $k$ graphs given as $\{G_1,\ldots,G_k\}$.
We split the vertex set into dense and sparse components $V^d$ and $V^s$.
To generate the $i$th snapshot we create two components  $V^d_i$ and $V^s_i$
by starting from $V^d$ and $V^s$ and moving nodes from $V^s$ to $V^d$ with a probability of $\eta_i$.
The probability $\eta_i$ is selected randomly for each snapshot from a uniform distribution $[0.01, 0.09]$.
Once the vertices are generated, we sample the edges using a stochastic block model, with the edge probabilities
being $p_d$, $p_s$, and $p_c$ for dense component, sparse component, and cross edges, respectively.
We created $6$ such synthetic datasets in total to test our algorithms.

\textbf{Results of synthetic datasets:}
We report our results in Table~\ref{tab:stats-syn-soft}.

%Next, we consider the results of \problemsoft problem, shown in Table \ref{tab:stats-syn-soft}. 
First, we  observe that the discovered density values $d_{dis}$ approximately match each other, that is, both \algsoftone and \algsofttwo perform equally well in terms of the densities. Similar
result holds also for the scores $\score{\cdot; \lambda}$ and 
minimum Jaccard coefficients $J_{min}$.
However, \algsoftone runs faster than  \algsofttwo. This is probably due to the fact that \algsofttwo takes more time to select the next vertex to delete, which is the bottleneck in both algorithms despite of having same asymptotic time complexity per iteration in \algsoftone and overall time complexity in \algsofttwo.
Let us now compare the discovered sets  to the ground truth, given in columns $\rho$. We can see both algorithms gives similar values which indicates equally good performance of \algsoftone and \algsofttwo.

Our next step is to study the effect of the input parameter $\lambda$. First, we observe Figure~\ref{fig:lam-jac} which demonstrates $\rho$ as a function of $\lambda$.
In Figure~\ref{fig:lb}, we see that $\rho$ gradually decreases as we increase $\lambda$. This is due to the fact that when we increase the weight of constraint part of $\score{}$, the algorithms try to  find dense sets with higher Jaccard coefficients which eventually forces to deviate from its ground truth. Furthermore, if we set $\lambda = 2$, we can see a drastic change in $\rho$. % which quickly drops from $0.9$ to $0.67$.
%Next, let us observe Figure~\ref{fig:lc}, where $\rho$ stays relatively constant as we increase $\lambda$ from $0.3$ to $1.26$. Nevertheless, a steeper slope can be seen when $\lambda = 1.26$. Afterwards, $\rho$ remains constant from $\lambda = 2$ to $\lambda = 2.3$.

Let us now consider Figure~\ref{fig:db} which demonstrates how the discovered sum of densities  change with respect to  $\lambda$.
%In Figure~\ref{fig:da}, we observe that $d_{dis}$ remains constant until $0.43$ and decreases further as we increase $\alpha$, similarly to the behavior of $\rho$ shown in Figure \ref{fig:la}. 
%Let us move to Figure~\ref{fig:db} and \ref{fig:dc}.
We see the decreasing trend showing that second term in the objective function starts to dominate with the increase of $\lambda$.

Next, we study how the score function $\score{}$ behaves over $\lambda$ shown in Figure \ref{fig:obja}. We can observe that both \algsoftone and \algsofttwo have an increasing trend when we increase $\lambda$ from $0.3$ to $2.3$. This is expected as larger $\lambda$ should yield larger scores.

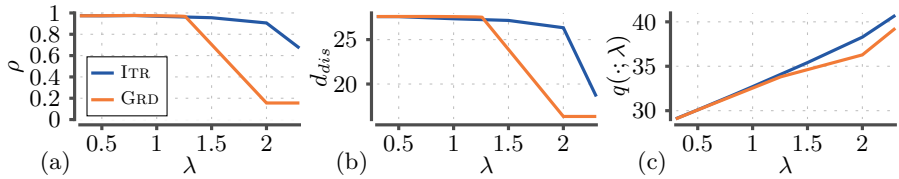
\begin{figure}[t!]
\begin{subcaptiongroup}
%\phantomcaption\label{fig:la}
\phantomcaption\label{fig:lb}
%\phantomcaption\label{fig:lc}
\phantomcaption\label{fig:db}
%\phantomcaption\label{fig:dc}
\phantomcaption\label{fig:obja}
%\phantomcaption\label{fig:objb}
\begin{center}
\begin{tabular}{llll}

\begin{tikzpicture}
\begin{axis}[xlabel={$\lambda$}, ylabel= {$\rho$},
    width = 4.5cm,
    height = 3cm,
    xmin = 0.3,
    xmax = 2.3,
    ymin = 0,
    ymax = 1,
    scaled y ticks = false,
    cycle list name=yaf,
    yticklabel style={/pgf/number format/fixed},
    no markers,
	legend entries = {\algsoftone, \algsofttwo},
	legend pos = south west
]
\addplot table [x=x, y=y, col sep=comma] {fig-1-b.csv};
\addplot table [x=x, y=y, col sep=comma] {fig-1-c.csv};
\pgfplotsextra{\yafdrawaxis{0.3}{2.3}{0}{1}}
\end{axis}
\node[anchor=north east] at (0, -0.3) {(a)};
\end{tikzpicture}&

\begin{tikzpicture}
\begin{axis}[xlabel={$\lambda$}, ylabel= {$d_{dis}$},
    width = 4.5cm,
    height = 3cm,
    xmin = 0.3,
    xmax = 2.3,
    ymin = 16,
    ymax = 28,
    scaled y ticks = false,
    cycle list name=yaf,
    yticklabel style={/pgf/number format/fixed},
    no markers,
]
\addplot table [x=x, y=y1, col sep=comma] {fig-1-b.csv};
\addplot table [x=x, y=y1, col sep=comma] {fig-1-c.csv};
\pgfplotsextra{\yafdrawaxis{0.3}{2.3}{16}{28}}
\end{axis}
\node[anchor=north east] at (0, -0.3) {(b)};
\end{tikzpicture}&
\begin{tikzpicture}
\begin{axis}[xlabel={$\lambda$}, ylabel= {$\score{\cdot; \lambda}$},
    width = 4.5cm,
    height = 3cm,
    xmin = 0.3,
    xmax = 2.3,
    ymin = 29,
    ymax = 41,
    scaled y ticks = false,
    cycle list name=yaf,
    yticklabel style={/pgf/number format/fixed},
    no markers,
]
\addplot table [x=x, y=y2, col sep=comma] {fig-1-b.csv};
\addplot table [x=x, y=y2, col sep=comma] {fig-1-c.csv};
\pgfplotsextra{\yafdrawaxis{0.3}{2.3}{29}{41}}
\end{axis}
\node[anchor=north east] at (0, -0.3) {(c)};
\end{tikzpicture}

\end{tabular}
\end{center}
\end{subcaptiongroup}
\caption{
Average Jaccard index to the ground truth $\rho$ as a function of $\lambda$ in Figure \ref{fig:lb}.
Discovered density $d_{dis}$ as a function of $\lambda$ 
in Figure~\ref{fig:db}.
Scores $\score{\cdot; \lambda}$ as a function of $\lambda$
in Figure~\ref{fig:obja}. This experiment was performed using \dtname{Syn-3} dataset.
}
\label{fig:lam-jac}
\end{figure}

\begin{table}[t!]
\setlength{\tabcolsep}{0pt}
\caption{
Characteristics of  real-world datasets. Here, $\abs{V}$  gives the number of
vertices, $\abs{E}$ is the expected number of edges, $k$  is the number of
snapshots, $d_{dcs}$ gives the density of densest common subgraph, and $d_{ind}$
gives the sum of densities of locally densest subgraph from each graph snapshot. 
}

\label{tab:stats4}
\pgfplotstabletypeset[
    begin table={\begin{tabular*}{\textwidth}},
    end table={\end{tabular*}},
    col sep=comma,
	columns = {dataset,n,m,tau,dcs_den,dense_den},
    columns/dataset/.style={string type, column type={@{\extracolsep{\fill}}l}, column name=\emph{Data}},
    columns/n/.style={fixed, set thousands separator={\,}, column type=r, column name=$\abs{V}$},
    columns/m/.style={fixed, set thousands separator={\,}, column type=r, column name=$\abs{E}$},
    columns/tau/.style={fixed, set thousands separator={\,}, column type=r, column name=$k$},
    columns/time_0/.style={fixed, set thousands separator={\,}, column type=r, column name=$t_1$},
    columns/time_1/.style={fixed, set thousands separator={\,}, column type=r, column name=$t_2$},
    columns/time_2/.style={fixed, set thousands separator={\,}, column type=r, column name=$t_3$},
    columns/l_alpha/.style={fixed, set thousands separator={\,}, column type=r, column name=$\lambda$},
    columns/true_den/.style={fixed, set thousands separator={\,}, column type=r, column name=$d_{true}$}, 
    columns/dcs_den/.style={fixed, set thousands separator={\,}, column type=r, column name=$d_{dcs}$}, 
    columns/dense_den/.style={fixed, set thousands separator={\,}, column type=r, column name=$d_{ind}$}, 
    columns/min_jac_dis_0/.style={fixed, set thousands separator={\,}, column type=r, column name=$j_{1}$}, 
    columns/min_jac_dis_1/.style={fixed, set thousands separator={\,}, column type=r, column name=$j_{2}$}, 
    columns/min_jac_dis_2/.style={fixed, set thousands separator={\,}, column type=r, column name=$j_{3}$}, 
    columns/avg_jac_0/.style={fixed, set thousands separator={\,}, column type=r, column name=$j_{1}$}, 
    columns/avg_jac_1/.style={fixed, set thousands separator={\,}, column type=r, column name=$j_{2}$}, 
    columns/avg_jac_2/.style={fixed, set thousands separator={\,}, column type=r, column name=$j_{3}$}, 
    columns/itr_0/.style={fixed, set thousands separator={\,}, column type=r, column name=$i_1$},
    columns/itr_1/.style={fixed, set thousands separator={\,}, column type=r, column name=$i_2$},
    columns/dis_d_0/.style={fixed,set thousands separator={\,}, column type=r, column name=$d_1$},
    columns/dis_d_1/.style={fixed,set thousands separator={\,}, column type=r, column name=$d_2$},
    columns/dis_d_2/.style={fixed,set thousands separator={\,}, column type=r, column name=$d_3$},
    columns/dis_obj_1/.style={fixed,set thousands separator={\,}, column type=r, column name=$o_2$},
    columns/dis_obj_2/.style={fixed,set thousands separator={\,}, column type=r, column name=$o_3$},
    every head row/.style={
		before row={\toprule},
			after row=\midrule},
    every last row/.style={after row=\bottomrule},
]
{real-data.csv}
\end{table}

\begin{table}[t!]
\setlength{\tabcolsep}{0pt}
\caption{
Computational statistics from the experiments for  real-world datasets. 
Here, $\lambda$ is the parameter in
$\score{\cdot; \lambda}$, $i$ gives the number of iterations using \algsoftone,
columns
$d_{dis}$ are the discovered sum of densities,  columns $\score{}$ are the discovered scores, and  columns $time$ give the computational time in seconds.
}

\label{tab:real-soft}
\pgfplotstabletypeset[
    begin table={\begin{tabular*}{\textwidth}},
    end table={\end{tabular*}},
    col sep=comma,
	columns = {dataset,l_alpha,dis_d_1,dis_obj_1,time_1,itr_1,dis_d_2,dis_obj_2,time_2},
    columns/dataset/.style={string type, column type={@{\extracolsep{\fill}}l}, column name=\emph{Data}},
    columns/n/.style={fixed, set thousands separator={\,}, column type=r, column name=$\abs{V}$},
    columns/m/.style={fixed, set thousands separator={\,}, column type=r, column name=$\abs{E}$},
    columns/tau/.style={fixed, set thousands separator={\,}, column type=r, column name=$\tau$},
    columns/time_0/.style={fixed, set thousands separator={\,}, column type=r, column name=$time$},
    columns/time_1/.style={fixed, set thousands separator={\,}, column type=r, column name=$time$},
    columns/time_2/.style={fixed, set thousands separator={\,}, column type=r, column name=$time$},
    columns/l_alpha/.style={fixed, set thousands separator={\,}, column type=r, column name=$\lambda$},
    columns/true_den/.style={fixed, set thousands separator={\,}, column type=r, column name=$d_{true}$}, 
    columns/dcs_den/.style={fixed, set thousands separator={\,}, column type=r, column name=$dcs$}, 
    columns/dense_den/.style={fixed, set thousands separator={\,}, column type=r, column name=$d_{l}$}, 
    columns/min_jac_dis_0/.style={fixed, set thousands separator={\,}, column type=r, column name=$j_{1}$}, 
    columns/min_jac_dis_1/.style={fixed, set thousands separator={\,}, column type=r, column name=$j_{2}$}, 
    columns/min_jac_dis_2/.style={fixed, set thousands separator={\,}, column type=r, column name=$j_{3}$}, 
    columns/avg_jac_0/.style={fixed, set thousands separator={\,}, column type=r, column name=$j_{1}$}, 
    columns/avg_jac_1/.style={fixed, set thousands separator={\,}, column type=r, column name=$j_{2}$}, 
    columns/avg_jac_2/.style={fixed, set thousands separator={\,}, column type=r, column name=$j_{3}$}, 
    columns/itr_0/.style={fixed, set thousands separator={\,}, column type=r, column name=$i_1$},
    columns/itr_1/.style={fixed, set thousands separator={\,}, column type=r, column name=$i$},
    columns/dis_d_0/.style={fixed,set thousands separator={\,}, column type=r, column name=$d_1$},
    columns/dis_d_1/.style={fixed,set thousands separator={\,}, column type=r, column name=$d_{dis}$},
    columns/dis_d_2/.style={fixed,set thousands separator={\,}, column type=r, column name=$d_{dis}$},
    columns/dis_obj_1/.style={fixed,set thousands separator={\,}, column type=r, column name=$\score{}$},
    columns/dis_obj_2/.style={fixed,set thousands separator={\,}, column type=r, column name=$\score{}$},
    every head row/.style={
		before row={\toprule
		& &
		\multicolumn{4}{l}{\algsoftone} &
		\multicolumn{3}{l}{\algsofttwo} \\
		\cmidrule(r){3-6}
		\cmidrule(r){7-9}
		},
			after row=\midrule},
    every last row/.style={after row=\bottomrule},
]
{real-soft.csv}
\end{table}

\textbf{Real-world datasets:} %\label{subsec:real}
We consider $6$ publicly available, real-world datasets. The details of the  datasets  are shown in Table~\ref{tab:stats4}.
\dtname{Twitter-\#}~\cite{tsantarliotis2015topic}\footnote{\url{https://github.com/ksemer/BestFriendsForever-BFF-}\label{foot:gitbff}}  is a hashtag network where nodes correspond to hashtags and edges corresponds to the
interactions  where  two hashtags appear in a tweet. This dataset contains $15$ such daily graph snapshots in total. 
\dtname{Enron}\footnote{\url{http://www.cs.cmu.edu/~./enron/}} is a popular dataset which contains email communication data within senior management of Enron company.  It contains $183$ daily snapshots in which daily email count is at least $5$.
\dtname{Facebook}~\cite{viswanath2009evolution}\footnote{\url{https://networkrepository.com/fb-wosn-friends.php} } is a network of Facebook users in  New Orleans regional
community.  It contains a set of facebook wall posts  among these users during 9th of June to 20th of August, 2006.
\dtname{Students}\footnote{\url{https://toreopsahl.com/datasets/\#online_social_network}} is an online message
network at the University of California, Irvine. It spans over $122$ days.
\dtname{Twitter-user}~\cite{rozenshtein2020finding}\footnote{\url{https://github.com/polinapolina/segmentation-meets-densest-subgraph/tree/master/data}} is a network of twitter users in Helsinki 2013.
It contains  a set of tweets which appear each others' names.
\dtname{Tumblr} ~\cite{leskovec2009meme}\footnote{\url{http://snap.stanford.edu/data/memetracker9.html}} contains 
phrase or quote mentions appeared in blogs and news media.  It contains author and meme interactions  of users over $3$ months from February to April in 2009.

\textbf{Results of real-world datasets:}
We report the results obtained from the experiments  with real-world datasets  in Table \ref{tab:real-soft}.

%Next, let us focus on the performance of \algsoftone and \algsofttwo algorithms with real-world datasets that are given in Table \ref{tab:real-soft}. 
First, we compare scores $\score{}$  obtained using \algsoftone and \algsofttwo.
As we can see, apart from few cases in \dtname{Enron} and \dtname{Students} datasets, \algsoftone  achieves greater score than \algsofttwo.
Furthermore, we can observe in  $time$ columns, \algsoftone runs faster than \algsofttwo.
Next, let us observe column $i$ which gives number of iterations with \algsoftone algorithm. We can see that we have at most $10$ number of iterations which is reasonable to deal with real-world datasets.

As we compare columns $d_{dis}$ which show discovered densities,
we can occasionally see approximately similar values but also deviations. 
Next, let us observe the effect of $\lambda$ on $d_{dis}$ and $\score{}$.
Based on $\lambda$,  the ratio between $d_{dis}$  and  $\score{}$ tends to
change. For example, in general, when $\lambda$ is lowered, $d_{dis}$  tends to
increase while  $\score{}$  decreases, as expected. 

\begin{table}[t!]
\caption{Twitter hash tags discovered for \dtname{Twitter-8} dataset. % which is over a span of 8 days in November, 2013. 
% For brevity, for provide only the difference to DCS
% for certain days.
%The number in parentheses indicates the size of the set.
}
\label{tab:twitter}
\begin{tabular*}{\textwidth}{p{12cm}}
\toprule
DCS: abudhabigp, fp1, abudhabi, guti, f1, pushpush, skyf1, hulk, allowin, bottas, kimi, fp3, fp2, unleashthehulk, density: \pgfmathprintnumber[precision=2]{7.071428571428571}\\

\midrule
\algsoftone algorithm: $\lambda = 0.8$, density: \pgfmathprintnumber[precision=2]{12.227272727272727}, objective: \pgfmathprintnumber[precision=2]{21.758887778887782} \\
\midrule
Day 1:  indiangp, skyf1, kimi, f1 \\
% \midrule
Day 2:   abudhabigp, skyf1, f1 \\
% \midrule
Day 3:  kimi, skyf1, abudhabigp, f1 \\
% \midrule
Day 4:   abudhabigp, skyf1, f1 \\
% \midrule
Day 5:   abudhabigp, english, arabic, spanish, french, danish, swedish, f1, endimpunitybh, skyf1, bahrain \\
% \midrule
Day 6:   abudhabigp, fp1, abudhabi, guti, f1, skyf1, hulk, allowin, bottas, kimi, fp3, fp2 \\
% \midrule
Day 7:   abudhabigp, abudhabi, guti, f1, pushpush, skyf1, hulk, allowin, bottas, kimi, fp3, quali \\
% \midrule
Day 8:  abudhabigp, skyf1, f1 \\
\bottomrule
\end{tabular*}
\end{table}

%\textbf{Case studies:}
\textbf{Case study using Twitter-8 dataset:}
In this section, we present a case-study and analyze the result which illustrates trending twitter hash tags over a span of $8$ days, under a Jaccard constrained environment.

\dtname{Twitter-8} contains  a hashtag network from November, 2013.
We prepared this dataset by extracting  first $8$ daily graph snapshots from
\dtname{Twitter-\#} dataset.  Here, each node of the graph represents a
specific hashtag.  As seen in the tags from Table \ref{tab:twitter}, Formula-1
racing car event which occurred on Abu Dhabi has been trending during the
period. 
We set  $\lambda = 0.8$ and find different sets of dense subgraphs for each day using \algsoftone. 
On Day 1, tags \textit{indeangp}, \textit{skyf1}, \textit{kimi}, and \textit{f1}  have been 
added to the dense hashtag collection whereas on Day 2, the tag \textit{kimi} who is a Finnish racing legend and the tag \textit{indeangp} have been removed from trending list.
On Day 3, \textit{kimi}  has been re-entered into the trending list and the tag \textit{indeangp} has been replaced by  \textit{abudhabigp}.
On Day 6, the tags like \textit{bottas} (another Finnish racing car driver), \textit{fp2} (Free practice 2), \textit{fp3} (Free practice 3), and etc have been added which indicates that additional racing car event
related tags are trending.
On Day 5, we can observe more new tags like \textit{bahrain},
\textit{english}, \textit{arabic}, \textit{french}, \textit{danish}, and
\textit{swedish} have been appeared which seems not directly related to racing
car event. Moreover, new dense collection gives a higher density of $12.22$ with respect to the DCS density of $7.07$.

\section{Concluding remarks}\label{sec:conclusions}

We introduced a novel Jaccard weighted, dense subgraph discovery problem (\problemsoft) for
graphs with multiple snapshots. Here, our goal was to find a dense subset of
vertices from each graph snapshot such that the sum of densities 
and the similarity between the snapshots is maximized.

We proved that our problem is \np-hard, and  
designed an iterative algorithm which runs in  $\bigO{n^2k^2 + m \log n + k^3
n}$ time per iteration and a greedy algorithm which runs in
$\bigO{n^2k^2 + m \log n + k^3 n}$ time.

We experimentally showed  that the number of iterations was low in iterative
algorithm, and that the
algorithms could find the ground truth using synthetic datasets and could discover dense collections in real world datasets. 
We also studied the effect of our user parameter $\lambda$. 
Finally, we performed a case study showing interpretable results.

The paper introduces several interesting directions for future work.  In this
paper, we enforced the pairwise Jaccard constraint between all available pairs
of snapshots. However, we can relax this constraint further by letting only a
portion of sets which lies within a specific window to assure the Jaccard
similarity constraint which may lead to future work.  Another possible
direction is adopting different type of density for our problem setting.

\bibliographystyle{splncs04nat}
\bibliography{bibliography}

 \appendix

\end{document}